\let\csname equation*\endcsname\relax
\let\csname endequation*\endcsname\relax
\newtheorem{proposition}{Proposition}
\crefname{theorem}{Theorem}{Theorems}
\crefname{proposition}{Proposition}{Propositions}
\crefname{definition}{Definition}{Definitions}
\crefname{lemma}{Lemma}{Lemmas}
\crefname{figure}{Figure}{Figures}
\crefname{corollary}{Corollary}{Corollary}
\crefname{conjecture}{Conjecture}{Conjectures}
\crefname{section}{Section}{Sections}
\crefname{appendix}{Appendix}{Appendixes}
\crefname{observation}{Observation}{Observation}
\crefname{remark}{Remark}{Remark}
\crefname{example}{Example}{Examples}
\crefname{equation}{Eq.}{Eqs.}
\crefname{table}{Table}{Tables}
\crefname{theorem}{Theorem}{Theorems}
\begin{document}

\title{Non-symmetric GHZ states: weighted hypergraph and controlled-unitary graph representations}

 \author{Hrachya Zakaryan, Konstantinos-Rafail Revis, and Zahra Raissi}

 \affiliation{Department of Computer Science, Paderborn University, Warburger Str. 100, 33098, Paderborn, Germany}
 \affiliation{Institute for Photonic Quantum Systems (PhoQS), Paderborn University, Warburger Str. 100, 33098 Paderborn, Germany}

\begin{abstract} 
Non-symmetric GHZ states ($n$-GHZ$_\alpha$), defined by unequal superpositions of $\ket{00…0}$ and $\ket{11…1}$, naturally emerge in experiments due to decoherence, control errors, and state preparation imperfections. Despite their relevance in quantum communication, relativistic quantum information, and quantum teleportation, these states lack a stabilizer formalism and a graph representation, hindering their theoretical and experimental analysis. We establish a graph-theoretic framework for non-symmetric GHZ states, proving their local unitary (LU) equivalence to two structures: fully connected weighted hypergraphs with controlled-phase interactions and star-shaped controlled-unitary (CU) graphs. While weighted hypergraphs generally lack stabilizer descriptions, we demonstrate that non-symmetric GHZ states can be efficiently stabilized using local operations and a single ancilla, independent of system size. We extend this framework to qudit systems, constructing LU-equivalent weighted qudit hypergraphs and showing that general non-symmetric qudit GHZ states can be described as star-shaped CU graphs. Our results provide a systematic approach to characterizing and stabilizing non-symmetric multipartite entanglement in both qubit and qudit systems, with implications for quantum error correction and networked quantum protocols.
\end{abstract}  

\maketitle

The Greenberger–Horne–Zeilinger (GHZ) state and its generalization to $n$-qubit and $n$-qudit systems are fundamental resources in multipartite entanglement, with applications in quantum metrology \cite{ghzsensing1,ghzsensing2}, quantum secret sharing \cite{ghzsecret}, and quantum communication \cite{ghzcom1,ghzcom2,ghzcom3}. GHZ states belong to the well-studied family of graph states \cite{Hein1,Hein2}, where entanglement is encoded via controlled-Z ($CZ$) operations along the edges of a mathematical graph. The GHZ state is LU-equivalent to two standard graph representations: the {\it fully connected graph} and the {\it star-shaped graph} \cite{Hein2}. Furthermore, graph states can be described in the stabilizer formalism  \cite{stabilizers}, making them central to quantum error correction \cite{grapherror1,grapherror2}, quantum secret sharing \cite{secret1,secret2,secret3} and measurement-based quantum computation \cite{measurement_graph,measurement_hyper}.

A natural extension of graph states are hypergraph states, where entanglement is generated via multi-qubit controlled operations \cite{Hypergraph,Hypergraph2,Hypergraph3}. These states are used in quantum algorithms \cite{HypergraphAlgo2} and as quantum resources \cite{HypergraphAlgo1}. Weighted hypergraphs further extend this framework by assigning real-valued weights to hyperedges, allowing controlled-phase ($CZ^\alpha$) interactions. Such weighted structures arise in spin chains, lattice models, and purification protocols \cite{weighted1,weighted2,weighted3}. However, despite their significance, weighted hypergraphs generally lack a stabilizer description, limiting their applicability in quantum error correction and fault-tolerant quantum computing.

Beyond standard GHZ states, an important generalization is the class of non-symmetric GHZ states, defined as
\begin{align}
    \ket{n\text{-GHZ}_\alpha}=\cos{\frac{\alpha \pi}{2}}\ket{00\dots 0}+\sin{\frac{\alpha \pi}{2}}\ket{11\dots 1}.\nonumber
\end{align}
These states naturally arise in experiments due to decoherence, control errors, and imperfections in state preparation \cite{expghz2,expghz3,expghz4,expghz5}. Moreover, non-symmetric GHZ states have been experimentally realized in superconducting circuits and as entangled coherent states in optical systems \cite{Goss2024-superconducting-qutrit,tm,expghz, expghz5, expghz6}.
They also play key roles in relativistic quantum information, including the Unruh effect \cite{unruh}, and serve as probabilistic quantum channels in quantum teleportation \cite{ghzchannel1,ghzchannel2,ghzchannel3,ghzchannel4,ghzchannel5,ghzchannel6}. Despite their relevance, no general stabilizer formalism or graph-based representation exists for these states, restricting their theoretical characterization and experimental verification.

In this Letter, we introduce a graph-based framework for non-symmetric GHZ states by establishing their LU equivalence to {\it fully connected weighted hypergraphs}. While weighted hypergraphs are generally not stabilizer states, we show that non-symmetric GHZ states can be stabilized using only a {\it single ancilla} qubit, independent of system size. This approach provides a systematic way to characterize and stabilize non-symmetric multipartite GHZ states.

We further generalize this framework to {\it qudits}, where non-symmetric GHZ states exhibit additional degrees of freedom. We construct an LU-equivalent weighted qudit hypergraph, and demonstrate that a single ancilla suffices for its stabilization. Beyond hypergraph states, we extend the graph-state formalism by incorporating controlled-unitary ($CU$) gates, showing that arbitrary non-symmetric qudit GHZ states can be represented as {\it star-shaped $CU$ graphs}. This controlled-unitary formalism provides a novel perspective on non-symmetric  multipartite entanglement and expands graph-based quantum state analysis.

\textit{Hypergraph States.}--- Hypergraph states extend the concept of graph states by allowing multi-qubit interactions through multi-controlled-Z ($CZ$) operations. A $k$-level hypergraph $G(V,E)$ consists of a set of vertices $V$ connected by hyperedges $E$, where each hyperedge may involve up to $k$ vertices, i.e., $E\subseteq\bigoplus_{k^\prime=2}^k V^{k^\prime}$, introducing a richer entanglement structure compared to standard graph states.

A hypergraph state associated with a given hypergraph $G(V,E)$ is defined as:
 \begin{align}
    \ket{G} = \prod_{\textbf{e} \in E} CZ_{\textbf{e}}^{\Gamma_{\textbf{e}}} \ket{+}^{\otimes n},
 \end{align}
where $\ket{+}=H\ket{0}$ and each hyperedge $\textbf{e}$ contributes a multi-controlled phase gate parameterized by the adjacency matrix element $\Gamma_{\textbf{e}}$,
 \begin{align}
    \Gamma_{\textbf{e}}=\begin{cases}
            r & \text{if } (\textbf{e}=\{e_1, e_2,...,e_k\},r)\in E\\
            0 & \text{otherwise}
    \end{cases},
 \end{align}
 where for hyperedges involving fewer than $k$ vertices, we fill the corresponding index vector with zeros to maintain a uniform representation.
The multi-controlled-Z ($CZ_\textbf{e}$) operation on qudits in the set $\textbf{e}$ is given by,
 \begin{align}
    CZ_\textbf{e}=\displaystyle\sum_{k_{e_1},\dots,k_{e_m}=0}^{d-1}\omega^{\bar{k}_\textbf{e}}\ket{k_{e_1} \dots k_{e_m}}\bra{k_{e_1} \dots k_{e_m}},
 \end{align}
where $\omega=e^{2\pi i/d}$, $m=|\textbf{e}|$ is the size of the set $\textbf{e}$ and $\bar{k}_\textbf{e}$ represents the collective phase contribution of the qudit states associated with the hyperedge interaction i.e., $\bar{k}_\textbf{e}=\displaystyle\prod_{j=1}^m k_{e_j}$.

The special case of $k=2$ corresponds to graph states, where each edge represents a standard controlled-Z operation, whereas for $k>2$, hypergraph states capture a broader class of multipartite entanglement structures. This extended framework not only provides a richer mathematical foundation for quantum networks \cite{tian2024-routinghypergraph} but also offers a powerful tool for analyzing non-symmetric GHZ states, revealing new stabilizer properties and circuit representations that enhance their applicability in quantum computing \cite{Banerjee2020-blockchain-weighted-hypergraph}.
\begin{figure}
    \centering
    \includegraphics[width=0.7\linewidth]{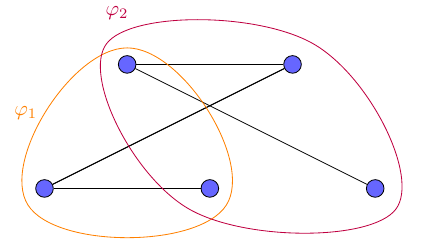}
    \caption{Representation of a weighted hypergraph state. A hyperedge of weight $\varphi_1$ connects three vertices on the left, while a separate hyperedge of weight $\varphi_2$ links four vertices on the right.}
    \label{fig:2}
\end{figure}

\textit{Non-symmetric GHZ and Weighted Hypergraphs.}---The symmetric GHZ state is local unitary (LU) equivalent to both a star-shaped graph state, where a central qubit is entangled with all others, and a fully connected graph state, where each qubit is directly connected to every other qubit. This equivalence provides an alternative graph-based description of GHZ states, widely used in entanglement studies (see Supplementary Material for details and visual representation). 

We extend this graph-based approach to non-symmetric GHZ states,  
\begin{align}
    \ket{n\text{-GHZ}_\alpha}=\cos{\frac{\alpha \pi}{2}}\ket{00\dots0}+\sin{\frac{\alpha \pi}{2}}\ket{11\dots1},
\end{align}  
which introduces an imbalance in superposition amplitudes controlled by the parameter $\alpha$. Non-symmetric GHZ states are LU equivalent to fully connected weighted hypergraph states. We formalize this equivalence in the following proposition.  

\textit{Proposition 1.}---(Weighted hypergraph representation of non-symmetric GHZ states)
For a set of vertices $V$, let $E_k$ denote the set of all $k$ element combinations from $V$.  The non-symmetric GHZ state $\ket{n\text{-GHZ}_\alpha}$ is LU equivalent to the following fully connected weighted hypergraph state:
\begin{align}\label{eq:fully-connected-hypergraph}
    \ket{G}:=\displaystyle\prod_{k=1}^{n-1}\prod_{\textbf{e}\in E_{k+1}}CZ_{\textbf{e}}^{(-2)^{k}\alpha}\ket{+}^{\otimes V},
\end{align}
where each hyperedge $\textbf{e}$ involving $k+1$ vertices carries a weight of $(-2)^{k}\alpha \pi$ (see Table~\ref{tab:1}).
 This equivalence is achieved via the local transformation:  
\begin{align}
    \ket{G}= RZ_1^\dagger(\alpha\pi)P^\dagger(\alpha\pi)^{\otimes V\setminus\{1\}}H^{\otimes V}P^\dagger_1(\pi/2)\ket{n\text{\normalfont-GHZ}_\alpha},
\end{align} 
where  
\begin{align}
    RZ(\varphi)=\begin{bmatrix}
        e^{-i\varphi/2}&0\\
        0&e^{i\varphi/2}
    \end{bmatrix}, \hspace{0.4cm}
    P(\varphi)=\begin{bmatrix}
        1 &0 \\
        0& e^{i\varphi}
    \end{bmatrix}.
\end{align}  
\begin{table}
    \centering
    \begin{tabular}{|c|}
    \hline\\
         $\ket{2\text{-GHZ}_\alpha}=\cos{\frac{\alpha \pi}{2}}\ket{00}+\sin{\frac{\alpha \pi}{2}}\ket{11}$\\[5pt]
         \includegraphics[width=0.25\linewidth]{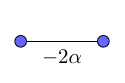}\\
        \hline\\
         
         $\ket{3\text{-GHZ}_\alpha}=\cos{\frac{\alpha \pi}{2}}\ket{000}+\sin{\frac{\alpha \pi}
         {2}}\ket{111}$\\[5pt]
          \includegraphics[width=0.7\linewidth]{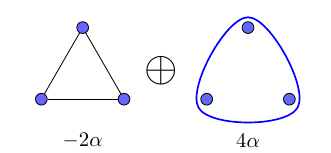}\\\hline
         \\
         
         $\ket{4\text{-GHZ}_\alpha}=\cos{\frac{\alpha \pi}{2}}\ket{0000}+\sin{\frac{\alpha \pi}{2}}\ket{1111}$\\[5pt]
           \includegraphics[width=\linewidth]{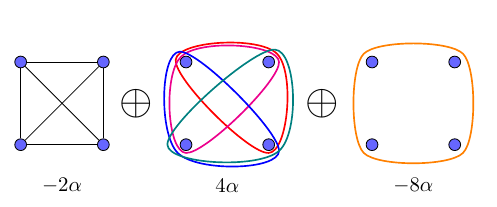}\\\hline
    \end{tabular}
    \caption{ Fully connected weighted hypergraph representations of $\ket{n\text{-GHZ}_\alpha}$ for $n = 2, 3, 4$ qubits. Each hyperedge acts on $k$ qubits and carries a phase weight of $(-2)^k \alpha \pi$.}\label{tab:1}
\end{table}

The detailed proof, provided in the Supplementary Material, uses successive local unitary transformations, including Hadamard and controlled-phase gates, to map the non-symmetric GHZ state to its hypergraph representation. For $\alpha=1/2$, hyperedges of size $k=1$ apply CZ gates, while those with $k\geq2$ reduce to identity operations.
This simplification transforms the weighted hypergraph into a fully connected graph, where each qubit is directly connected to every other qubit through $CZ$ interactions.

Although LU equivalence provides a structured framework for non-symmetric GHZ states, weighted hypergraph states lack a stabilizer description. We resolve this by constructing a complete stabilizer set with a single ancilla, independent of system size $n$, allowing direct stabilizer verification.

\textit{Local Stabilizers via a Single Ancilla.}---Unlike standard graph states, weighted hypergraph states do not admit a direct stabilizer formalism \cite{Chen_2024-hypergraph}. However, we demonstrate that by introducing a single ancilla qubit, the fully connected weighted hypergraph associated with the non-symmetric GHZ state can be efficiently stabilized using local operators.

This single ancilla, entangled with all $n$ qubits through controlled-$Z$ operations, transforms the system into a structure that admits a local stabilizer set. Importantly, the number of ancillas remains fixed at one, regardless of system size, ensuring scalability.

\textit{Proposition 2.}---(Stabilization of non-symmetric GHZ states with a single ancilla)
Given the fully connected weighted hypergraph state in Eq.~\eqref{eq:fully-connected-hypergraph}, we construct a complete stabilizer set utilizing a single ancilla qubit. The stabilizers are given by: 
\begin{align}\label{eq:stabilizers}
    \hspace{-0.1cm} K_{G^\prime}^{(1)}&=X\otimes Z\otimes Z \otimes ... \otimes Z\nonumber\\
    \hspace{-0.1cm}K_{G^\prime}^{(2)}&=X^\alpha Z X^{-\alpha}\otimes P^\dagger(\alpha\pi)XP(\alpha\pi)\otimes I \otimes ... \otimes I\nonumber\\
    \hspace{-0.1cm}K_{G^\prime}^{(3)}&=X^\alpha Z X^{-\alpha}\otimes I\otimes P^\dagger(\alpha\pi)XP(\alpha\pi) \otimes ... \otimes I\\
    \hspace{-0.1cm}&\hspace{0.15cm}\vdots\nonumber\\
    \hspace{-0.1cm}K_{G^\prime}^{(n+1)}&=X^\alpha Z X^{-\alpha}\otimes I\otimes I \otimes ... \otimes P^\dagger(\alpha\pi)XP(\alpha\pi)\, ,\nonumber
\end{align}

where 
\begin{align}
    X^\alpha \coloneqq \frac{1}{2}((1-e^{i\pi\alpha})I+(1+e^{i\pi\alpha})X)\, ,
\end{align}
the first qubit is the ancilla, and $G^\prime$ is the graph with the ancilla.

The proof (see Supplementary Material) follows a known LU transformation technique \cite{Tsimakuridze_2017}, mapping a graph state to a weighted hypergraph state. The key insight is that the introduction of an ancilla transforms the weighted interactions into an effective star-shaped graph with local stabilizers. By applying $X^\alpha$ for $\alpha \in \mathbb{R}$ on the ancilla and local phase gates on the rest we obtain the weighted hypergraph in Eq.~\eqref{eq:fully-connected-hypergraph} on the non-ancilla qubits.

\textit{Generalization to Qudits.}---For qubits, the GHZ state contains only two elements in the superposition, limiting the possible ways to introduce asymmetry. However, in qudits, the GHZ state extends over $ d $ levels,  
\begin{align}
    \ket{n\text{-GHZ}^d}=\sum_{k=0}^{d-1}\ket{kk...k},
\end{align}
allowing for a richer class of non-symmetric generalizations. For simplicity, we omit normalization factors in our expressions where they are not essential.

To define a qudit non-symmetric GHZ state that naturally connects to weighted hypergraphs, we first generalize the $X^\alpha$ operator from qubits to qudits,  
\begin{align}
    X^\alpha=H^\dagger Z^\alpha H=\sum_{k,l=0}^{d-1}\omega^{l(\alpha-k)}X^k.
\end{align}
Using this, we construct the following qudit generalization:

\textit{Proposition 3.}---(Qudit weighted hypergraph representation of non-symmetric GHZ states)
    Given a qudit $ X^\alpha $, the non-symmetric qudit GHZ state is defined as  
    \begin{align}
        \ket{n\text{-GHZ\,}^d_\alpha} \coloneqq  \prod_{k=2}^{n}CX_{1k} \, X^\alpha_1\ket{0}^{\otimes n}\, ,
    \end{align}
    which can be rewritten as  
    \begin{align}
        \ket{n\text{-GHZ\,}^d_\alpha} = \sum_{k,l=0}^{d-1}\omega^{l(\alpha-k)}\ket{kk....k}.
    \end{align}  
    This state is LU equivalent to a qudit-weighted hypergraph with controlled-phase interactions, given by  
    \begin{align}  \label{eq:qudit hypergraph}
        CZ_{G}^{-\alpha}=(Z_1Z_2...Z_{n})^{-\alpha}.
    \end{align}
where
\begin{align}
    CZ_G=\displaystyle\prod_{\textbf{e}\in E} CZ_{\textbf{e}} \text{ for } G=(V,E)
\end{align}
A detailed proof of this proposition is provided in the Supplementary Material. Unlike in the qubit case, where weighted hypergraphs emerge naturally, the qudit case introduces additional complexity. This arises from the fact that,  
\begin{align} 
(Z_2Z_3...Z_{n+1})^{-\alpha} \neq Z_2^{-\alpha}Z_3^{-\alpha}...Z_{n+1}^{-\alpha},
\end{align}  
requiring correction terms to expand the brackets. This is present for qubits as well, however while for qubits the correction terms are of the form $CZ^\alpha$, for qudits more general non-local phase interactions are required. These are given by, 
\begin{align}  
CP_e(\vec{\alpha})=\sum_{k_{e_1},...,k_{e_m}=0}^{d-1}\omega^{\vec{\alpha}_{k_e}}\ket{k_{e_1}...k_{e_m}}\bra{k_{e_1}...k_{e_m}},  
\end{align} 
where $\vec{\alpha}_{k_e}$ denotes the phase factor associated with the decimal encoding  $k_e$, of the computational basis states. Therefore, $ CP_e(\vec{\alpha}) $ is a diagonal matrix of phases determined by the weight vector $ \vec{\alpha} $. Unlike in the qubit case, where the interaction weights follow a simple recursive pattern, the qudit case requires explicit computation of the interaction structure.  This necessitates an algorithmic approach, which we detail in the Supplementary Material.

Despite this structural complexity, the stabilizer formulation remains highly analogous to the qubit case. The stabilizers for the qudit non-symmetric GHZ state take the form  
\begin{align}
    \hspace{-0.23cm}K_{G^\prime}^{(1)}&=X\otimes Z\otimes Z\otimes ... \otimes Z\nonumber\\
    \hspace{-0.23cm}K_{G^\prime}^{(2)}&=X^{-\alpha} Z X^{\alpha}\otimes P^\dagger(\alpha\pi)XP(\alpha\pi)\otimes I \otimes... \otimes I\nonumber\\
    \hspace{-0.23cm}K_{G^\prime}^{(3)}&=X^{-\alpha} Z X^{\alpha}\otimes I\otimes P^\dagger(\alpha\pi)XP(\alpha\pi)\otimes ... \otimes I\\
    \hspace{-0.23cm}&\hspace{0.15cm}\vdots\nonumber\\
    \hspace{-0.23cm}K_{G^\prime}^{(n+1)}&=X^{-\alpha} Z X^{\alpha}\otimes I\otimes I\otimes ... \otimes P^\dagger(\alpha\pi)XP(\alpha\pi).\nonumber
\end{align}

As in the qubit case, only a single ancilla is required, independent of system size $ n $. This guarantees that non-symmetric qudit GHZ states can be efficiently verified and manipulated with minimal resource overhead, making them scalable for quantum information applications.  

\textit{Controlled-Unitary Graph States.}---We extend the graph state formalism by incorporating controlled-unitary ($CU$) operations as entangling gates, in contrast to conventional diagonal gates such as controlled-Z ($CZ$) for standard graph and hypergraph states or controlled-phase ($CZ^\alpha$) gates for weighted graph states. This broader framework naturally extends hypergraph-based representations of non-symmetric GHZ states in qudit systems.

Specifically, we show that the most general non-symmetric qudit GHZ state,  
\begin{align}\label{eq:CU-GHZ}
    \ket{n\text{-GHZ}_{\vec{a}}}=\sum_{j=0}^{d-1}a_j\ket{jj\dots j},
\end{align}
where $ \vec{a} = \{a_0, a_1, ..., a_{d-1}\} \in \mathbb{C}^{d} $, can be represented as a \textit{star-shaped graph} with controlled-$ U $ gates.

\textit{Proposition 4.}---(Controlled-unitary graph representation of general non-symmetric qudit GHZ states)
For any general non-symmetric qudit GHZ state, there exists a local unitary (LU) transformation that maps $ \ket{n\text{-GHZ}_{\vec{a}}} $ to a star-shaped controlled-unitary graph state, where $U$ is given by,
\begin{align}
    U=HA^\dagger Z A H^\dagger,
\end{align}
with the operator $A$ defined as the single-qudit transformation matrix:\begin{align}
    A\ket{0}=\sum_{j=0}^{d-1}a_j\ket{j}.
\end{align}

The proof follows by explicitly constructing $\ket{n\text{-GHZ}_{\vec{a}}}$ using controlled-$X$ operations and applying a sequence of local Hadamard and phase corrections to obtain the controlled-unitary graph form (see Supplementary Material). This formulation provides a structured way to describe arbitrary non-symmetric GHZ states within a graph-based framework. The key feature of this approach is that while the control qudits remain unchanged, the target qudit undergoes the transformation $ U $. Although general $CU$ operations may not commute, in the star-shaped graph configuration, commutativity is preserved since all operations act on a shared target.

As an illustrative example, consider the qubit case where $ A = X X^\alpha $, which naturally leads to the weighted hypergraph structure described in Eq.~\eqref{eq:fully-connected-hypergraph}. In this case, the corresponding unitary transformation is  
\begin{align}
    U=\begin{bmatrix}
        0&e^{i\alpha\pi}\\
        e^{-i\alpha\pi}&0
    \end{bmatrix}.
\end{align}
This demonstrates how different choices of $ A $ yield distinct $CU$ interactions.

\begin{figure}
    \centering
    \includegraphics[width=0.6\linewidth]{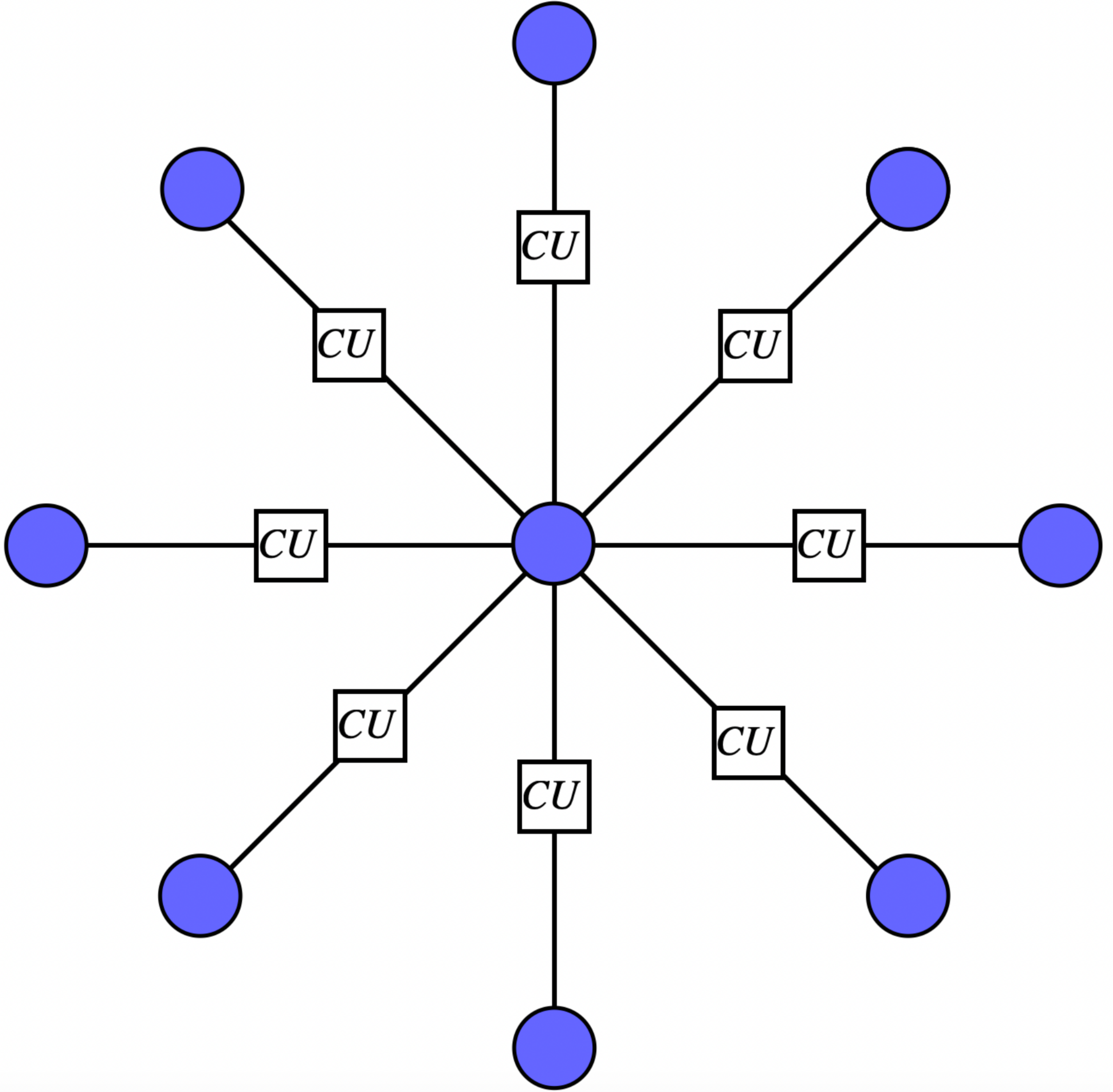}
    \caption{Controlled-Unitary (CU) Star-Shaped Graph. In this representation, the central vertex serves as the target, while all other vertices act as controls. The non-symmetric GHZ state, Eq.~\eqref{eq:CU-GHZ}, is local unitary (LU) equivalent to this CU graph.}
    \label{fig:3}
\end{figure}

\textit{Summary and Outlook.}--- 
We established a graph-based framework for non-symmetric GHZ states, proving their LU equivalence to two structures: fully connected weighted hypergraphs with controlled-phase interactions and star-shaped controlled-unitary ($CU$) graphs. This dual representation bridges weighted hypergraphs and controlled-unitary entanglement structures, offering new insights into non-symmetric multipartite quantum states.

In experiments, decoherence, control errors, and imperfections in high-dimensional quantum systems can lead to non-symmetric states rather than ideal GHZ states. Our framework provides a systematic method to construct stabilizers for such states using a single ancilla, enabling their detection through local operations regardless of system size. Beyond hypergraph-based stabilizers, the CU formalism offers a structured approach to synthesizing non-symmetric quantum states and analyzing controlled quantum operations.

The ability to construct stabilizers for non-symmetric GHZ states may have implications for error correction, as stabilizer based techniques are a cornerstone of fault-tolerant quantum computation. Additionally, the CU framework extends graph-based methods to controlled-unitary interactions, relevant to quantum simulation and networked quantum protocols. Future directions include applying this approach to other resource states, investigating experimental implementations in noisy quantum devices, and exploring potential advantages of non-symmetric entanglement in quantum networks and multipartite cryptography.

\acknowledgements

We thank Mario Flory, Markus Grassl, Otfried G\"{u}hne, Barbara Kraus, G\'{e}za T\'{o}th, and Karol \.{Z}yczkowski for valuable discussions. This work was supported by the Equal Opportunity Program, Grant Line 2: Support for Female Junior Professors and Postdocs through Academic Staff Positions, 14th funding round of Paderborn University.
\vspace{-0.5cm}

\bibliography{GHZ}

\begin{thebibliography}{43}
\providecommand{\natexlab}[1]{#1}
\providecommand{\url}[1]{\texttt{#1}}
\expandafter\ifx\csname urlstyle\endcsname\relax
  \providecommand{\doi}[1]{doi: #1}\else
  \providecommand{\doi}{doi: \begingroup \urlstyle{rm}\Url}\fi

\bibitem[Degen et~al.(2017)Degen, Reinhard, and Cappellaro]{ghzsensing1}
C.~L. Degen, F.~Reinhard, and P.~Cappellaro.
\newblock Quantum sensing.
\newblock \emph{Rev. Mod. Phys.}, 89:\penalty0 035002, Jul 2017.
\newblock \doi{10.1103/RevModPhys.89.035002}.
\newblock URL \url{https://link.aps.org/doi/10.1103/RevModPhys.89.035002}.

\bibitem[Shettell and Markham(2020)]{ghzsensing2}
Nathan Shettell and Damian Markham.
\newblock Graph states as a resource for quantum metrology.
\newblock \emph{Phys. Rev. Lett.}, 124:\penalty0 110502, Mar 2020.
\newblock \doi{10.1103/PhysRevLett.124.110502}.
\newblock URL \url{https://link.aps.org/doi/10.1103/PhysRevLett.124.110502}.

\bibitem[Hillery et~al.(1999)Hillery, Bu\ifmmode~\check{z}\else \v{z}\fi{}ek, and Berthiaume]{ghzsecret}
Mark Hillery, Vladim\'{\i}r Bu\ifmmode~\check{z}\else \v{z}\fi{}ek, and Andr\'e Berthiaume.
\newblock Quantum secret sharing.
\newblock \emph{Phys. Rev. A}, 59:\penalty0 1829--1834, Mar 1999.
\newblock \doi{10.1103/PhysRevA.59.1829}.

\bibitem[Gao et~al.(2005)Gao, Yan, and Wang]{ghzcom1}
T~Gao, F~L Yan, and Z~X Wang.
\newblock Deterministic secure direct communication using ghz states and swapping quantum entanglement.
\newblock \emph{Journal of Physics A: Mathematical and General}, 38\penalty0 (25):\penalty0 5761, jun 2005.
\newblock \doi{10.1088/0305-4470/38/25/011}.

\bibitem[Jiang et~al.(2019)Jiang, Zhou, Xu, and Luo]{ghzcom2}
She-Xiang Jiang, Ri-Gui Zhou, Ruiqing Xu, and Gaofeng Luo.
\newblock Cyclic hybrid double-channel quantum communication via bell-state and ghz-state in noisy environments.
\newblock \emph{IEEE Access}, 7:\penalty0 80530--80541, 2019.
\newblock \doi{10.1109/ACCESS.2019.2923322}.

\bibitem[Farouk et~al.(2017)Farouk, Batle, Elhoseny, Naseri, Lone, Fedorov, Alkhambashi, Ahmed, and Abdel-Aty]{ghzcom3}
Ahmed Farouk, J.~Batle, M.~Elhoseny, Mosayeb Naseri, Muzaffar Lone, Alex Fedorov, Majid Alkhambashi, Syed~Hassan Ahmed, and M.~Abdel-Aty.
\newblock Robust general n user authentication scheme in a centralized quantum communication network via generalized ghz states.
\newblock \emph{Frontiers of Physics}, 13\penalty0 (130306), nov 2017.
\newblock \doi{10.1007/s11467-017-0717-3}.

\bibitem[Hein et~al.(2004)Hein, Eisert, and Briegel]{Hein1}
M.~Hein, J.~Eisert, and H.~J. Briegel.
\newblock Multiparty entanglement in graph states.
\newblock \emph{Phys. Rev. A}, 69:\penalty0 062311, Jun 2004.
\newblock \doi{10.1103/PhysRevA.69.062311}.

\bibitem[Hein et~al.(2006)Hein, W., J., R., den Nest~M., and J.]{Hein2}
M.~Hein, Dür W., Eisert J., Raussendorf R., Van den Nest~M., and Briegel~H. J.
\newblock Entanglement in graph states and its applications.
\newblock 2006.
\newblock URL \url{https://arxiv.org/abs/quant-ph/0602096}.

\bibitem[Gottesman(1997)]{stabilizers}
Daniel Gottesman.
\newblock Stabilizer codes and quantum error correction, 1997.
\newblock URL \url{https://arxiv.org/abs/quant-ph/9705052}.

\bibitem[Grassl et~al.(2002)Grassl, Klappenecker, and Rotteler]{grapherror1}
M.~Grassl, A.~Klappenecker, and M.~Rotteler.
\newblock Graphs, quadratic forms, and quantum codes.
\newblock In \emph{Proceedings IEEE International Symposium on Information Theory,}, pages 45--, 2002.
\newblock \doi{10.1109/ISIT.2002.1023317}.

\bibitem[Schlingemann and Werner(2001)]{grapherror2}
D.~Schlingemann and R.~F. Werner.
\newblock Quantum error-correcting codes associated with graphs.
\newblock \emph{Phys. Rev. A}, 65:\penalty0 012308, Dec 2001.
\newblock \doi{10.1103/PhysRevA.65.012308}.

\bibitem[Markham and Sanders(2008)]{secret1}
Damian Markham and Barry~C. Sanders.
\newblock Graph states for quantum secret sharing.
\newblock \emph{Phys. Rev. A}, 78:\penalty0 042309, Oct 2008.
\newblock \doi{10.1103/PhysRevA.78.042309}.

\bibitem[Keet et~al.(2010)Keet, Fortescue, Markham, and Sanders]{secret2}
Adrian Keet, Ben Fortescue, Damian Markham, and Barry~C. Sanders.
\newblock Quantum secret sharing with qudit graph states.
\newblock \emph{Phys. Rev. A}, 82:\penalty0 062315, Dec 2010.
\newblock \doi{10.1103/PhysRevA.82.062315}.

\bibitem[Bell et~al.(2014)Bell, Markham, Herrera-Martí, Marin, Wadsworth, Rarity, and Tame]{secret3}
B.~A. Bell, D.~Markham, D.~A. Herrera-Martí, A.~Marin, W.~J. Wadsworth, J.~G. Rarity, and M.~S. Tame.
\newblock Experimental demonstration of graph-state quantum secret sharing.
\newblock \emph{Nature Communications}, 5\penalty0 (5480), nov 2014.
\newblock \doi{10.1038/ncomms6480}.

\bibitem[Van~den Nest et~al.(2006)Van~den Nest, Miyake, D\"ur, and Briegel]{measurement_graph}
Maarten Van~den Nest, Akimasa Miyake, Wolfgang D\"ur, and Hans~J. Briegel.
\newblock Universal resources for measurement-based quantum computation.
\newblock \emph{Phys. Rev. Lett.}, 97:\penalty0 150504, Oct 2006.
\newblock \doi{10.1103/PhysRevLett.97.150504}.

\bibitem[Gachechiladze et~al.(2019)Gachechiladze, G\"uhne, and Miyake]{measurement_hyper}
Mariami Gachechiladze, Otfried G\"uhne, and Akimasa Miyake.
\newblock Changing the circuit-depth complexity of measurement-based quantum computation with hypergraph states.
\newblock \emph{Phys. Rev. A}, 99:\penalty0 052304, May 2019.
\newblock \doi{10.1103/PhysRevA.99.052304}.

\bibitem[Rossi et~al.(2013)Rossi, Huber, Bruß, and Macchiavello]{Hypergraph}
M~Rossi, M~Huber, D~Bruß, and C~Macchiavello.
\newblock Quantum hypergraph states.
\newblock \emph{New Journal of Physics}, 15\penalty0 (11):\penalty0 113022, nov 2013.
\newblock \doi{10.1088/1367-2630/15/11/113022}.

\bibitem[Qu et~al.(2013)Qu, Wang, Li, and Bao]{Hypergraph2}
Ri~Qu, Juan Wang, Zong-shang Li, and Yan-ru Bao.
\newblock Encoding hypergraphs into quantum states.
\newblock \emph{Phys. Rev. A}, 87:\penalty0 022311, Feb 2013.
\newblock \doi{10.1103/PhysRevA.87.022311}.

\bibitem[Gühne et~al.(2014)Gühne, Cuquet, Steinhoff, Moroder, Rossi, Bruß, Kraus, and Macchiavello]{Hypergraph3}
Otfried Gühne, Martí Cuquet, Frank E~S Steinhoff, Tobias Moroder, Matteo Rossi, Dagmar Bruß, Barbara Kraus, and Chiara Macchiavello.
\newblock Entanglement and nonclassical properties of hypergraph states.
\newblock \emph{Journal of Physics A: Mathematical and Theoretical}, 47\penalty0 (33):\penalty0 335303, aug 2014.
\newblock \doi{10.1088/1751-8113/47/33/335303}.

\bibitem[Rossi et~al.(2014)Rossi, Bruß, and Macchiavello]{HypergraphAlgo2}
M~Rossi, D~Bruß, and C~Macchiavello.
\newblock Hypergraph states in grover's quantum search algorithm.
\newblock \emph{Physica Scripta}, 2014\penalty0 (T160):\penalty0 014036, apr 2014.
\newblock \doi{10.1088/0031-8949/2014/T160/014036}.

\bibitem[Huang et~al.(2024)Huang, Li, Chen, Zhai, Zheng, Chi, Li, He, Gong, and Wang]{HypergraphAlgo1}
Jieshan Huang, Xudong Li, Xiaojiong Chen, Chonghao Zhai, Yun Zheng, Yulin Chi, Yan Li, Qiongyi He, Qihuang Gong, and Jianwei Wang.
\newblock Demonstration of hypergraph-state quantum information processing.
\newblock \emph{Nature Communications}, 15\penalty0 (2601), mar 2024.
\newblock \doi{10.1038/s41467-024-46830-7}.

\bibitem[Hartmann et~al.(2007)Hartmann, Calsamiglia, Dür, and Briegel]{weighted1}
L~Hartmann, J~Calsamiglia, W~Dür, and H~J Briegel.
\newblock Weighted graph states and applications to spin chains, lattices and gases.
\newblock \emph{Journal of Physics B: Atomic, Molecular and Optical Physics}, 40\penalty0 (9):\penalty0 S1, apr 2007.
\newblock \doi{10.1088/0953-4075/40/9/S01}.

\bibitem[Webster et~al.(2022)Webster, Brown, and Bartlett]{weighted2}
Mark~A. Webster, Benjamin~J. Brown, and Stephen~D. Bartlett.
\newblock The {XP} {S}tabiliser {F}ormalism: a {G}eneralisation of the {P}auli {S}tabiliser {F}ormalism with {A}rbitrary {P}hases.
\newblock \emph{{Quantum}}, 6:\penalty0 815, September 2022.
\newblock ISSN 2521-327X.
\newblock \doi{10.22331/q-2022-09-22-815}.

\bibitem[Carle et~al.(2013)Carle, Kraus, D\"ur, and de~Vicente]{weighted3}
Tatjana Carle, Barbara Kraus, Wolfgang D\"ur, and Julio~I. de~Vicente.
\newblock Purification to locally maximally entangleable states.
\newblock \emph{Phys. Rev. A}, 87:\penalty0 012328, Jan 2013.
\newblock \doi{10.1103/PhysRevA.87.012328}.

\bibitem[Erhard et~al.(2018)Erhard, Malik, Krenn, and Zeilinger]{expghz2}
Manuel Erhard, Mehul Malik, Mario Krenn, and Anton Zeilinger.
\newblock Experimental greenberger--horne--zeilinger entanglement beyond qubits.
\newblock \emph{Nature Photonics}, 12\penalty0 (12):\penalty0 759--764, Dec 2018.
\newblock ISSN 1749-4893.
\newblock \doi{10.1038/s41566-018-0257-6}.

\bibitem[Cozzolino et~al.(2019)Cozzolino, Da~Lio, Bacco, and Oxenløwe]{expghz3}
Daniele Cozzolino, Beatrice Da~Lio, Davide Bacco, and Leif~Katsuo Oxenløwe.
\newblock High-dimensional quantum communication: Benefits, progress, and future challenges.
\newblock \emph{Advanced Quantum Technologies}, 2\penalty0 (12):\penalty0 1900038, 2019.
\newblock \doi{https://doi.org/10.1002/qute.201900038}.
\newblock URL \url{https://onlinelibrary.wiley.com/doi/abs/10.1002/qute.201900038}.

\bibitem[Ecker et~al.(2019)Ecker, Bouchard, Bulla, Brandt, Kohout, Steinlechner, Fickler, Malik, Guryanova, Ursin, and Huber]{expghz4}
Sebastian Ecker, Fr\'ed\'eric Bouchard, Lukas Bulla, Florian Brandt, Oskar Kohout, Fabian Steinlechner, Robert Fickler, Mehul Malik, Yelena Guryanova, Rupert Ursin, and Marcus Huber.
\newblock Overcoming noise in entanglement distribution.
\newblock \emph{Phys. Rev. X}, 9:\penalty0 041042, Nov 2019.
\newblock \doi{10.1103/PhysRevX.9.041042}.

\bibitem[Forbes and Nape(2019)]{expghz5}
Andrew Forbes and Isaac Nape.
\newblock Quantum mechanics with patterns of light: Progress in high dimensional and multidimensional entanglement with structured light.
\newblock \emph{AVS Quantum Science}, 1\penalty0 (1):\penalty0 011701, 10 2019.
\newblock ISSN 2639-0213.
\newblock \doi{10.1116/1.5112027}.
\newblock URL \url{https://doi.org/10.1116/1.5112027}.

\bibitem[Goss et~al.(2024)Goss, Ferracin, Hashim, Carignan-Dugas, Kreikebaum, Naik, Santiago, and Siddiqi]{Goss2024-superconducting-qutrit}
Noah Goss, Samuele Ferracin, Akel Hashim, Arnaud Carignan-Dugas, John~Mark Kreikebaum, Ravi~K. Naik, David~I. Santiago, and Irfan Siddiqi.
\newblock Extending the computational reach of a superconducting qutrit processor.
\newblock \emph{npj Quantum Information}, 10\penalty0 (1):\penalty0 92, 2024.
\newblock \doi{10.1038/s41534-024-00892-z}.

\bibitem[Serino et~al.(2024)Serino, Ridder, Bhattacharjee, Gil-Lopez, Brecht, and Silberhorn]{tm}
Laura Serino, Werner Ridder, Abhinandan Bhattacharjee, Jano Gil-Lopez, Benjamin Brecht, and Christine Silberhorn.
\newblock Orchestrating time and color: a programmable source of high-dimensional entanglement, 2024.
\newblock URL \url{https://arxiv.org/abs/2406.04909}.

\bibitem[Giovenale et~al.(2023)Giovenale, Hernandez-Martinez, Majtey, and Valdés-Hernández]{expghz}
N~Giovenale, L~Hernandez-Martinez, A~P Majtey, and A~Valdés-Hernández.
\newblock Optimal entanglement generation in ghz-type states.
\newblock \emph{Journal of Physics A: Mathematical and Theoretical}, 56\penalty0 (49):\penalty0 495302, nov 2023.
\newblock \doi{10.1088/1751-8121/ad0a44}.

\bibitem[Nape et~al.(2023)Nape, Sephton, Ornelas, Moodley, and Forbes]{expghz6}
Isaac Nape, Bereneice Sephton, Pedro Ornelas, Chane Moodley, and Andrew Forbes.
\newblock Quantum structured light in high dimensions.
\newblock \emph{APL Photonics}, 8\penalty0 (5):\penalty0 051101, 05 2023.
\newblock ISSN 2378-0967.
\newblock \doi{10.1063/5.0138224}.
\newblock URL \url{https://doi.org/10.1063/5.0138224}.

\bibitem[Zhang et~al.(2023)Zhang, Yang, and Fei]{unruh}
Tinggui Zhang, Hong Yang, and Shao-Ming Fei.
\newblock System–environment dynamics of ghz-like states in noninertial frames.
\newblock \emph{Quantum Information Processing}, 22\penalty0 (331), sep 2023.
\newblock \doi{10.1007/s11128-023-04081-3}.

\bibitem[Javed et~al.(2022)Javed, Pandey, Yadav, Prakash, and Prakash]{ghzchannel1}
Shamiya Javed, Ravi~Kamal Pandey, Phool~Singh Yadav, Ranjana Prakash, and Hari Prakash.
\newblock Probabilistic quantum teleportation via 3-qubit non-maximally entangled ghz state by repeated generalized measurements.
\newblock \emph{International Journal of Theoretical Physics}, 62\penalty0 (1):\penalty0 11, Dec 2022.
\newblock ISSN 1572-9575.
\newblock \doi{10.1007/s10773-022-05266-9}.

\bibitem[Yan and Yan(2010)]{ghzchannel2}
FengLi Yan and Tao Yan.
\newblock Probabilistic teleportation via a non-maximally entangled ghz state.
\newblock \emph{Chinese Science Bulletin}, 55\penalty0 (10):\penalty0 902--906, Apr 2010.
\newblock ISSN 1861-9541.
\newblock \doi{10.1007/s11434-009-0725-y}.

\bibitem[Wang et~al.(2014)Wang, Shu, Mo, and Zhang]{ghzchannel3}
Jin-wei Wang, Lan Shu, Zhi-wen Mo, and Zhi-hua Zhang.
\newblock Controlled teleportation of a qudit state by partially entangled ghz states.
\newblock \emph{International Journal of Theoretical Physics}, 53\penalty0 (8):\penalty0 2867--2873, Aug 2014.
\newblock ISSN 1572-9575.
\newblock \doi{10.1007/s10773-014-2084-y}.

\bibitem[Wang et~al.(2021)Wang, Huang, and Shu]{ghzchannel4}
Jinwei Wang, Liping Huang, and Lan Shu.
\newblock Deterministic multi-hop teleportation of arbitrary single qubit state via partially entangled ghz-type state.
\newblock \emph{International Journal of Theoretical Physics}, 60\penalty0 (6):\penalty0 2206--2215, Jun 2021.
\newblock ISSN 1572-9575.
\newblock \doi{10.1007/s10773-021-04837-6}.

\bibitem[Man et~al.(2007)Man, Xia, and An]{ghzchannel5}
Z.-X. Man, Y.-J. Xia, and N.~B. An.
\newblock Quantum state sharing of an arbitrarymultiqubit state using nonmaximally entangled ghz states.
\newblock \emph{The European Physical Journal D}, 42\penalty0 (2):\penalty0 333--340, May 2007.
\newblock ISSN 1434-6079.
\newblock \doi{10.1140/epjd/e2007-00024-x}.

\bibitem[Xiong et~al.(2016)Xiong, Yu, Zhan, and Zhang]{ghzchannel6}
Pei-Ying Xiong, Xu-Tao Yu, Hai-Tao Zhan, and Zai-Chen Zhang.
\newblock Multiple teleportation via partially entangled ghz state.
\newblock \emph{Frontiers of Physics}, 11\penalty0 (4):\penalty0 110303, Feb 2016.
\newblock ISSN 2095-0470.
\newblock \doi{10.1007/s11467-016-0553-x}.

\bibitem[Tian et~al.(2024)Tian, Liu, and Meng]{tian2024-routinghypergraph}
Yu~Tian, Yuefei Liu, and Xiangyi Meng.
\newblock Multipartite entanglement routing as a hypergraph immersion problem, 2024.
\newblock URL \url{https://arxiv.org/abs/2406.13452}.

\bibitem[Banerjee et~al.(2020)Banerjee, Mukherjee, and Panigrahi]{Banerjee2020-blockchain-weighted-hypergraph}
Shreya Banerjee, Arghya Mukherjee, and Prasanta~K. Panigrahi.
\newblock Quantum blockchain using weighted hypergraph states.
\newblock \emph{Physical Review Research}, 2\penalty0 (1):\penalty0 013322, 2020.
\newblock \doi{10.1103/PhysRevResearch.2.013322}.

\bibitem[Chen et~al.(2024)Chen, Yan, and Zhou]{Chen_2024-hypergraph}
Junjie Chen, Yuxuan Yan, and You Zhou.
\newblock Magic of quantum hypergraph states.
\newblock \emph{Quantum}, 8:\penalty0 1351, May 2024.
\newblock ISSN 2521-327X.
\newblock \doi{10.22331/q-2024-05-21-1351}.
\newblock URL \url{http://dx.doi.org/10.22331/q-2024-05-21-1351}.

\bibitem[Tsimakuridze and Gühne(2017)]{Tsimakuridze_2017}
Nikoloz Tsimakuridze and Otfried Gühne.
\newblock Graph states and local unitary transformations beyond local clifford operations.
\newblock \emph{Journal of Physics A: Mathematical and Theoretical}, 50\penalty0 (19):\penalty0 195302, apr 2017.
\newblock \doi{10.1088/1751-8121/aa67cd}.

\end{thebibliography}

\onecolumngrid

\section{Supplementary Material: [Non-symmetric GHZ states: weighted hypergraph and controlled-unitary graph representations]}

Here we provide the proofs to the propositions in the letter, the qudit weighted hypergraph algorithm and an example of a fully connected weighted hypergraph for 5 qubits.

\subsection{Graph Representations of GHZ and Non-Symmetric GHZ States}  

GHZ states are fundamental multipartite entangled states that admit two well-known local unitary (LU) equivalent graph representations: the star-shaped graph and the fully connected graph. These representations differ in topology but encode the same entanglement structure under LU transformations.

\begin{figure}[h]
    \centering
    \includegraphics[width=0.5\linewidth]{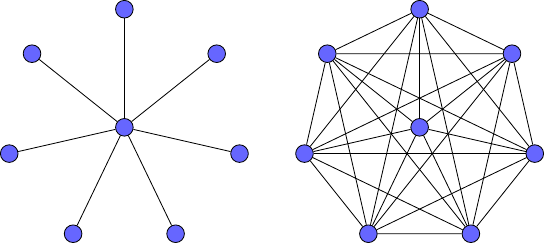}
    \caption{LU-equivalent graph representations of GHZ states. (Left) A star-shaped graph where a central qubit is entangled with all others. (Right) A fully connected graph where each qubit is connected to every other qubit.}
    \label{fig:1}
\end{figure}

Non-symmetric GHZ states, defined as  
\begin{align}
    \ket{n\text{-GHZ}_\alpha}=\cos{\frac{\alpha \pi}{2}}\ket{00\dots 0}+\sin{\frac{\alpha \pi}{2}}\ket{11\dots 1},
\end{align}  
extend this framework by introducing an asymmetry in the superposition amplitudes. Unlike standard GHZ states, these states are not LU equivalent to conventional graph states but can be mapped to fully connected weighted hypergraphs, where multi-qubit controlled-phase ($CZ^\alpha$) interactions encode the entanglement structure.

To illustrate this, we provide the weighted hypergraph representation for the five-qubit non-symmetric GHZ state in Fig.~\ref{fig:5ghz}. The hypergraph decomposition incorporates multi-qubit interactions with weight-dependent phase shifts.

\begin{figure}[h]
    \centering
    \includegraphics[width=0.7\linewidth]{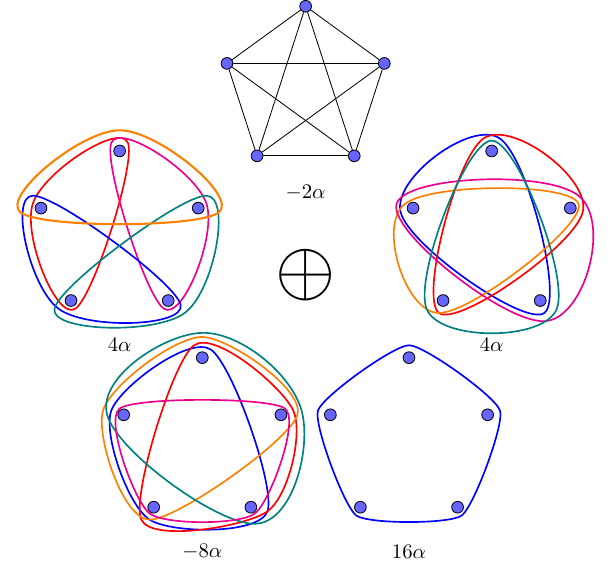}
    \caption{Fully connected weighted hypergraph representation of the five-qubit non-symmetric GHZ state. Each subfigure depicts hyperedges of different orders with corresponding phase weights.}
    \label{fig:5ghz}
\end{figure}

For $n=5$ qubits, the interactions follow:
\begin{itemize}
\item[•] Two-qubit edges ($k=1$): Weight $-2\alpha$  
\item[•] Three-qubit hyperedges ($k=2$): Weight $4\alpha$  
\item[•] Four-qubit hyperedges ($k=3$): Weight $-8\alpha$  
\item[•] Five-qubit hyperedge ($k=4$): Weight $16\alpha$  
\end{itemize}

This provides a graphical way to represent the entanglement structure of non-symmetric GHZ states and highlights the role of weighted multi-qubit interactions in their characterization.

\subsection{Proof of Proposition 1}
\begin{proposition} For a set of vertices $V$, let $E_k$ denote the set of all $k$ element combinations from $V$. The non-symmetric GHZ state $\ket{n\text{-GHZ}_\alpha}$ is LU equivalent to the following fully connected weighted hypergraph state:
\begin{align}\label{eq:fully-connected-hypergraph_supplementary}
    \ket{G}:=\displaystyle\prod_{k=1}^{n-1}\prod_{\textbf{e}\in E_{k+1}}CZ_{\textbf{e}}^{(-2)^{k}\alpha}\ket{+}^{\otimes V},
\end{align}
where each hyperedge $\textbf{e}$ involving $k+1$ vertices carries a weight of $(-2)^{k}\alpha \pi$.
 This equivalence is achieved via the local transformation:  
\begin{align}
    \ket{G}= RZ_1^\dagger(\alpha\pi)P^\dagger(\alpha\pi)^{\otimes V\setminus\{1\}}H^{\otimes V}P^\dagger_1(\pi/2)\ket{n\text{\normalfont-GHZ}_\alpha}\, ,
\end{align}  
where  
\begin{align}
    RZ(\varphi)=\begin{bmatrix}
        e^{-i\varphi/2}&0\\
        0&e^{i\varphi/2}
    \end{bmatrix}, \hspace{0.4cm}
    P(\varphi)=\begin{bmatrix}
        1 &0 \\
        0& e^{i\varphi}
    \end{bmatrix}.
\end{align}  
\end{proposition}
\begin{proof}
    \indent Let us define $\ket{x_1,x_2,\dots x_n}=\ket{ \boldsymbol{x} }$, where $\boldsymbol{x}$ is the decimal representation of $x_1x_2\dots x_n$. Let us
also define $w_{\boldsymbol{x}}=\text{Hamming Weight}(\boldsymbol{x})$, where the Hamming weight is the number of $1$s in $\boldsymbol{x}$. \\
Now we apply the $P^\dagger(\pi/2)$ and $H$ gates.
\begin{align}
    H^{\otimes V}\displaystyle P^\dagger_1 \textstyle(\frac{\pi}{2})\displaystyle\ket{n\text{-GHZ}_\alpha}&=\displaystyle\sum_{\boldsymbol{x} = 0}^{2^n-1}(\cos{\frac{\alpha\pi}{2}}-(-1)^{w_{\boldsymbol{x}}}i\sin{\frac{\alpha\pi}{2}})\ket{ \boldsymbol{x} }
    =    \displaystyle\sum_{\boldsymbol{x} = 0}^{2^n-1}e^{(-1)^{w_{\boldsymbol{x}}+1}i\frac{\alpha\pi}{2}}\ket{ \boldsymbol{x} }
    =\sum_{\boldsymbol{x} = 0}^{2^n-1}\omega^{(-1)^{w_{\boldsymbol{x}}+1}\frac{\alpha}{2}}\ket{ \boldsymbol{x} }.
\end{align}
Next we apply the $RZ^\dagger(\alpha\pi)$ on the first qubit and $P^\dagger(\alpha\pi)$ on the rest.
\begin{align}
    \ket{\psi}&=P^\dagger(\alpha\pi)^{\otimes V\setminus\{1\}}RZ_1^\dagger(\alpha\pi)\displaystyle\sum_{\boldsymbol{x} = 0}^{2^n-1}\omega^{(-1)^{w_{\boldsymbol{x}}+1}\frac{\alpha}{2}}\ket{ \boldsymbol{x} }=\displaystyle\sum_{\boldsymbol{x} = 0}^{2^n-1} \omega^{\alpha\left(\frac{(-1)^{w_{\boldsymbol{x}}+1}}{2}-w_{\boldsymbol{x}}+\frac{1}{2}\right)}\ket{ \boldsymbol{x} }.
\end{align}
Now we consider the fully connected hypergraph state,
\begin{align}
    \ket{G}&=\displaystyle\prod_{k=1}^{n-1}\prod_{\textbf{e}\in E_{k+1}}CZ_{\textbf{e}}^{(-2)^{k}\alpha}\ket{+}^{\otimes V}=\displaystyle\sum_{\boldsymbol{x} = 0}^{2^n-1}\prod_{k=1}^{n-1}\prod_{\textbf{e}\in E_{k+1}}CZ_{\textbf{e}}^{(-2)^{k}\alpha}\ket{ \boldsymbol{x} }.
\end{align}
For a given $\ket{ \boldsymbol{x} }$ and $k$, upon the application of the $CZ$s for all $\textbf{e}\in E_{k+1}$ we get
\begin{align}
    \ket{ \boldsymbol{x} }\rightarrow \omega^{\beta_k}\ket{ \boldsymbol{x} },
\end{align}
where $\beta_k=(-2)^{k}\alpha\displaystyle\sum_{\textbf{e}\in E_{k+1}}\prod_{j \in \textbf{e}}x_j$.\\
As only $w_{\boldsymbol{x}}$ of the $x_j$'s are $1$ there are only ${w_{\boldsymbol{x}}\choose k+1}$ non zero terms in the sum and the non zero ones all have a value of 1. Therefore, we can simplify the exponent to,
\begin{align}
    \beta_k=(-2)^{k}{w_{\boldsymbol{x}}\choose k+1}\alpha.
\end{align}

\noindent We now can perform the product over all $k$ , which will give us the following sum in the exponent of $\omega$,
\begin{align}
    \sum_{k=1}^{n-1}\beta_{k}=\alpha\sum_{k=1}^{n-1}(-2)^{k}{w_{\boldsymbol{x}}\choose k+1}.
\end{align}
We can simplify this further by noticing that ${w_{\boldsymbol{x}}\choose k}=0$ if $k>w_{\boldsymbol{x}}$. Hence,
\begin{align}
    \alpha\sum_{k=1}^{n-1}(-2)^{k}{w_{\boldsymbol{x}}\choose k+1}&=-\frac{\alpha}{2}\sum_{k=2}^{w_{\boldsymbol{x}}}(-2)^{k}{w_{\boldsymbol{x}}\choose k}=-\frac{\alpha}{2}\left(\sum_{k=0}^{w_{\boldsymbol{x}}}(-2)^{k}{w_{\boldsymbol{x}}\choose k}-1+2w_{\boldsymbol{x}}\right)\nonumber\\
    &=-\frac{\alpha}{2}((1-2)^{w_{\boldsymbol{x}}}-1+2w_{\boldsymbol{x}})=\alpha\left(\frac{(-1)^{w_{\boldsymbol{x}}+1}}{2}-w_{\boldsymbol{x}}+\frac{1}{2}\right),
\end{align}
where we have used the binomial theorem to obtain the 3\textsuperscript{rd} equality. \\
\par
\noindent Writing the graph state using the above, we find it equal to $\ket{\psi}$,
\begin{align}
    \ket{G}=\displaystyle\sum_{\boldsymbol{x} = 0}^{2^n-1}\omega^{\alpha\left(\frac{(-1)^{w_{\boldsymbol{x}}+1}}{2}-w_{\boldsymbol{x}}+\frac{1}{2}\right)}\ket{ \boldsymbol{x} }=\ket{\psi}.
\end{align}
Therefore,
\begin{align}
    \ket{G}= RZ_1^\dagger(\alpha\pi)P^\dagger(\alpha\pi)^{\otimes V\setminus\{1\}}H^{\otimes V}P^\dagger_1(\pi/2)\ket{n\text{\normalfont-GHZ}_\alpha}.
\end{align}
\end{proof}

\subsection{Proof of Proposition 2}
\begin{proposition}
    Given the weighted hypergraph state of $n=|V|$ qubits,
    \begin{align}
    \ket{G}=\displaystyle\prod_{k=1}^{n-1}\prod_{\textbf{e}\in E_{k+1}}CZ_{\textbf{e}}^{(-2)^{k}\alpha}\ket{+}^{\otimes V},
    \end{align}
    and a single ancilla, it is possible to stabilize the $n+1$ qubit state, if the ancilla is connected to each of the $n$ qubits through a $CZ$ operation.
    The stabilizers are given by,
    \begin{align}
    K_{G^\prime}^{(1)}&=X\otimes Z\otimes Z \otimes... \otimes Z\nonumber\\
    K_{G^\prime}^{(2)}&=X^\alpha Z X^{-\alpha}\otimes P^\dagger(\alpha\pi)XP(\alpha\pi)\otimes I \otimes... \otimes I\nonumber\\
    K_{G^\prime}^{(3)}&=X^\alpha Z X^{-\alpha}\otimes I\otimes P^\dagger(\alpha\pi)XP(\alpha\pi) \otimes... \otimes I\\
    &\hspace{0.2cm}\vdots\nonumber\\
    K_{G^\prime}^{(n+1)}&=X^\alpha Z X^{-\alpha}\otimes I\otimes I \otimes... \otimes P^\dagger(\alpha\pi)XP(\alpha\pi),\nonumber
\end{align}
where 
\begin{align}
    X^\alpha:=\frac{1}{2}((1-e^{i\pi\alpha})I+(1+e^{i\pi\alpha})X),
\end{align}
\label{eq:X-alpha}
the first qubit is the ancilla and $G^\prime$ is the graph with the ancilla.
\end{proposition}
\begin{proof}
    We employ a technique introduced by Tsimakuridze and G\"{u}hne \cite{Tsimakuridze_2017}, which establishes LU equivalence between graph states and specific weighted hypergraph states. This technique allows us to introduce a single ancilla qubit, enabling a stabilizer representation for the fully connected weighted hypergraph state. 

Consider a hypergraph $G^\prime$ and define the real power $\alpha$ of the Pauli-X gate as:
\begin{align}
    X^\alpha:=\frac{1}{2}((1-e^{i\pi\alpha})I+(1+e^{i\pi\alpha})X).
\end{align}
The application of $X^\alpha$ upon any vertex $v_i$ of $G^\prime$ leads to a weighted hypergraph given by the following rules,
\begin{enumerate}
    \item For each $\textbf{e}$ in $A_i=\{\textbf{e}|\textbf{e}/{v_i}\in E\}$ add a weight $\alpha$ to $\textbf{e}$.
    \item For each $k\leq n$, $\textbf{e}_1,\textbf{e}_2,..\textbf{e}_k\in A_i$, $\textbf{e}_1\neq \textbf{e}_2...\neq \textbf{e}_k$ add weight $(-2)^k\alpha$ to the edge $\textbf{e}_1\cup \textbf{e}_2...\cup \textbf{e}_k$.
\end{enumerate}

We now introduce a star-shaped graph with $n+1$ qubits, where the ancilla qubit (labeled 1) is connected to all $n$ system qubits via CZ operations:
\begin{align}
    E=\{\{1,2\},\{1,3\},...,\{1,n+1\}\}.
\end{align}
Now we consider the application of $X^\alpha$ on the ancilla. In accordance with the two rules given above we obtain,
\begin{align}
    A_1=\{\{2\},\{3\},...,\{n+1\}\}.
\end{align}
By Rule 1, all qubits except the ancilla receive local phase gates. The second rule tells us that for each $k$ combinations of the $n$ qubits in $A_1$, we add a hyperedge of weight $(-2)^k\alpha$ between those $k$ particles. However, if we ignore the ancilla qubit this is exactly the construction of the weighted hypergraph in Eq.~\eqref{eq:fully-connected-hypergraph_supplementary}.

To construct the stabilizers, we use the graph state stabilizer formalism:
\begin{align}
    K_G^{(j)}=X_{j} \prod_{k\in V}Z_k^{\Gamma_{jk}}.
\end{align}
Now we simply conjugate these with the operator,
\begin{align}
    X^\alpha\otimes P^\dagger(\alpha\pi)\otimes \dots \otimes P^\dagger(\alpha\pi).
\end{align}
where $X^\alpha$ is due to the original application and the phase gates are due to rule 1.
This leads to a new set of local stabilizers:
\begin{align}
    K_{G^\prime}^{(1)}&=X\otimes Z\otimes Z ... \otimes Z\nonumber\\
    K_{G^\prime}^{(2)}&=X^\alpha Z X^{-\alpha}\otimes P^\dagger(\alpha\pi)XP(\alpha\pi)\otimes I ... \otimes I\nonumber\\
    K_{G^\prime}^{(3)}&=X^\alpha Z X^{-\alpha}\otimes I\otimes P^\dagger(\alpha\pi)XP(\alpha\pi) ... \otimes I\\
    ...&\nonumber\\
    K_{G^\prime}^{(n+1)}&=X^\alpha Z X^{-\alpha}\otimes I\otimes I ... \otimes P^\dagger(\alpha\pi)XP(\alpha\pi).\nonumber
\end{align}

\end{proof}

This construction demonstrates that a single ancilla qubit, connected to all system qubits via controlled-Z operations, is sufficient to stabilize the fully connected weighted hypergraph state. The local nature of these stabilizers allows for direct experimental verification and manipulation, overcoming the primary challenge of multi-qubit entanglement in weighted hypergraphs. Thus, we have established a novel method for stabilizing non-symmetric GHZ states using a fixed number of ancillae—independent of system size $n$—which is crucial for practical quantum information processing.

\subsection{Proof of Proposition 3}
\begin{proposition}
    Given a qudit $X^\alpha$,
    \begin{align}
    X^\alpha&=H^\dagger Z^\alpha H=\displaystyle\sum_{k,l=0}^{d-1}\omega^{l(\alpha-k)}X^k,
\end{align}
we can define a qudit non-symmetric GHZ,
\begin{align}
    \ket{n\text{-GHZ}^d_\alpha}&:=\displaystyle\prod_{k=2}^{n}CX_{1k}X^\alpha_1\ket{0}^{\otimes n} =\sum_{k,l=0}^{d-1}\omega^{l(\alpha-k)}\ket{kk....k},
\end{align}
which is LU equivalent to a qudit weighted hypergraph given by the operation,
\begin{align}
    CZ_{G}^{-\alpha}=(Z_1Z_2...\,Z_{n})^{-\alpha}.
\end{align}
\end{proposition}
\begin{proof}
    We begin by considering the general multi-controlled-Z operation acting on a hyperedge $ \mathbf{e} $ in a qudit system:
\begin{align}
    CZ_\mathbf{e}=\sum_{k_{e_1},...,k_{e_m}=0}^{d-1} \omega^{\bar{k}_\mathbf{e}}\ket{k_{e_1}...k_{e_m}}\bra{k_{e_1}...k_{e_m}},
\end{align}
where the total phase contribution for a given basis state is defined as  
\begin{align}
    \bar{k}_\mathbf{e} = \prod_{j=1}^{m} k_{e_j}.
\end{align}
A key observation is how this operator commutes with the generalized $X$ gate at some vertex $e_1$. We have
\begin{align}
    X_{e_1} CZ_\mathbf{e}X_{e_1}^\dagger = \sum_{k_{e_1},...k_{e_m}=0}^{d-1} \omega^{\bar{k}_\mathbf{e}} \ket{k_{e_1}+1,...,k_{e_m}}\bra{k_{e_1}+1,...,k_{e_m}},
\end{align}
and by multiplying on the left by $ CZ_\mathbf{e}^\dagger $, we obtain

\begin{align}
    CZ_\textbf{e}^\dagger X_{e_1}CZ_\textbf{e}X_{e_1}^\dagger&=\displaystyle\sum_{l_{e_1},...l_{e_m}=0}^{d-1}\sum_{k_{e_1},...k_{e_m}=0}^{d-1}\omega^{\bar{k}_\textbf{e}-\bar{l}_\textbf{e}}\ket{l_{e_1},...l_{e_m}}\braket{l_{e_1},...l_{e_m}|k_{e_1}+1,...k_{e_m}}\bra{k_{e_1}+1,...k_{e_m}}\nonumber\\
    &=\displaystyle\sum_{l_{e_1},...l_{e_m}=0}^{d-1}\sum_{k_{e_1},...k_{e_m}=0}^{d-1}\omega^{\bar{k}_\textbf{e}-\bar{l}_\textbf{e}}\delta_{l_{e_1}-1,k_{e_1}}\prod_{j=2}^m \delta_{l_{e_j},k_{e_j}}\ket{l_{e_1},...l_{e_m}}\bra{k_{e_1}+1,...k_{e_m}}\\
    &=\displaystyle\sum_{l_{e_1}=1}^{d-1}\sum_{l_{e_2},...l_{e_m}=0}^{d-1}\omega^{-\bar{l}_{\textbf{e}/\{e_1\}}}\ket{l_{e_1}...l_{e_m}}\bra{l_{e_1}...l_{e_m}}+\sum_{l_{e_2},...l_{e_m}=0}^{d-1}\omega^{(d-1)\bar{l}_{\textbf{e}/\{e_1\}}}\ket{0...l_{e_m}}\bra{0...l_{e_m}}\nonumber\\
    &=\displaystyle\sum_{l_{e_1}=0}^{d-1}\ket{l_{e_1}}\bra{l_{e_1}}\otimes\sum_{l_{e_2},...l_{e_m}=0}^{d-1}\omega^{-\bar{l}_{\textbf{e}/\{e_1\}}}\ket{l_{e_2}...l_{e_m}}\bra{l_{e_2}...l_{e_m}}=\mathbb{I}_{e_1}\otimes CZ_{\textbf{e}/\{e_1\}}\nonumber , 
\end{align}

where to get the last line we have used $\omega^d=1$. Therefore, we have,
\begin{align}
    X_{e_1}CZ_\textbf{e}=CZ_\textbf{e}(\mathbb{I}_{e_1}\otimes CZ_{\textbf{e}/\{e_1\}})X_{e_1} \, .
\end{align}

Next, we derive the explicit form of $ X^\alpha $ using the following qudit operators and relations,
\begin{equation}
    \begin{aligned}
        H &= \sum_{j,k=0}^{d-1} \omega^{jk} \ket{j} \bra{k}\qquad\qquad &
        Z &= \sum_{j=0}^{d-1} \omega^j \ket{j} \bra{j} \\
        X &= \sum_{j=0}^{d-1} \ket{j+1} \bra{j}\qquad\qquad &
        X &= H^\dagger ZH \, .
    \end{aligned}
\end{equation}
Using the above equations we can write,
\begin{align}
    X^\alpha&=H^\dagger Z^\alpha H=\displaystyle\sum_{j^\prime,k^\prime,j,k,l=0}^{d-1} \omega^{jk}\omega^{-j^\prime k^\prime}\omega^{\alpha l}\ket{j^\prime}\braket{k^\prime|l}\braket{l|j}\bra{k}=\displaystyle\sum_{j^\prime,k^\prime,j,k,l=0}^{d-1} \omega^{jk}\omega^{-j^\prime k^\prime}\omega^{\alpha l} \delta_{k^\prime,l}\delta_{j,l}\ket{j^\prime}\bra{k}\nonumber\\
    &=\displaystyle\sum_{j^\prime,k,l=0}^{d-1} \omega^{lk}\omega^{-j^\prime l}\omega^{\alpha l}\ket{j^\prime}\bra{k}=\displaystyle\sum_{j,k,l=0}^{d-1} \omega^{l(j-(k+j)+\alpha)}\ket{j+k}\bra{j}=\displaystyle\sum_{j,k,l=0}^{d-1} \omega^{l(\alpha-k)}\ket{j+k}\bra{j}=\displaystyle\sum_{k,l=0}^{d-1}\omega^{l(\alpha-k)}X^k .
\end{align}
Similar to the qubit case we consider a qudit hypergraph $G(V,E)$ with
\begin{align}
    \ket{G}=\displaystyle\prod_{\textbf{e}\in E} CZ_{\textbf{e}}\ket{+}^{\otimes n}=CZ_G\ket{+}^{\otimes n}.
\end{align}
Additionally, we define the graph $\Delta_i G(V,\Delta_i E)$, where
\begin{align}
    \Delta_i E=\{\textbf{e}/{i}|i\in \textbf{e}, \textbf{e}\in E\}.
\end{align}
Using the commutation relation,
\begin{align}
    X^\alpha_i \ket{G}&=\displaystyle\sum_{k,l}\omega^{l(\alpha-k)}X_i^k CZ_G\ket{+}^{\otimes n}=CZ_G\sum_{k,l}\omega^{l(\alpha-k)}CZ_{\Delta_i G}^{\dagger k} X_i^k\ket{+}^{\otimes n}\\
    &=\sum_{k,l}\omega^{l(\alpha-k)}CZ_{\Delta_i G}^{\dagger k} CZ_G\ket{+}^{\otimes n}=CZ_{\Delta_i G}^{-\alpha}\ket{G}\nonumber
\end{align}
\noindent
Therefore, like in the qubit case the action of $X^\alpha$ has introduced new weighted hyperedges to our hypergraph.\\
Similar to the qubit case we consider the qudit \emph{star-shape graph} $G(V,E)$,
\begin{align}
    E=\{\{1,2\},\{1,3\},...,\{1,n+1\}\}.
\end{align}
We apply $X^\alpha$ on the ancilla qudit, i.e. qudit 1, which gives us,
\begin{align}
    \Delta_1 E=\{\{2\},\{3\},...,\{n+1\}\}.
\end{align}
Therefore, the additional weighted hypergraph added is,
\begin{align}\label{eq:qudit hypergraph}
    CZ_{\Delta_1 G}^{-\alpha}=(Z_2Z_3...\,Z_{n+1})^{-\alpha},
\end{align}
as an edge with a single vertex is simply a local operation. This does not look like a weighted hypergraph as $Z$ is a local operation,
however, in general (see next section),
\begin{align}
    (Z_2Z_3...Z_{n+1})^{-\alpha}\neq Z_2^{-\alpha}Z_3^{-\alpha}...\,Z_{n+1}^{-\alpha}.
\end{align}
To expand the brackets we need to include non-local operations. For example in the qubit case those non-local operations are what give rise to the fully connected weighted hypergraph. However, in the case of qudits, we find that
the non-local operations are not of the form of $CZ_{\textbf{e}}^\alpha$ but are of the form,
\begin{align}
    CP_e(\vec{\alpha})=\displaystyle\sum_{k_{e_1},...k_{e_m}=0}^{d-1}\omega^{\vec{\alpha}_{k_e}}\ket{k_{e_1}...k_{e_m}}\bra{k_{e_1}...k_{e_m}},
\end{align}
where $k_e$ is the decimal representation of $k_{e_1}k_{e_2}...k_{e_m}$. Therefore, $CP_e(\vec{\alpha})$ is a diagonal matrix of phases decided by the vector $\vec{\alpha}$. These don't follow a pattern like the qubit case, and therefore we need to use an algorithm to find the exact operations
to apply (see next section).\\
For our choice of the specific form of the non-symmetric qudit GHZ, we note that for qubits,
\begin{align}
    \ket{n\text{-GHZ}_\alpha}=\displaystyle \omega^{\alpha/2}\prod_{k=2}^{n}CX_{1k}X_1X^\alpha_1\ket{0}^{\otimes n}.
\end{align}
Therefore, we define the non-symmetric qudit GHZ state, $\ket{n\text{-GHZ}^d_\alpha}$ as,
\begin{align}
    \ket{n\text{-GHZ}^d_\alpha}&:=\displaystyle\prod_{k=2}^{n}CX_{1k}X^\alpha_1\ket{0}^{\otimes n}=\sum_{k,l=0}^{d-1}\omega^{l(\alpha-k)}\ket{kk....k}.
\end{align}

Now let us show that this state is indeed local unitary equivalent to the weighted hypergraph from Eq.~\eqref{eq:qudit hypergraph}. We apply Hadamards on all the qudits,
\begin{align}
    H^{\otimes n}\ket{n\text{-GHZ}_\alpha^d}=\sum_{k,l=0}^{d-1}\omega^{l(\alpha-k)}H^{\otimes n}\ket{kk....k}=\sum_{k,l=0}^{d-1}\omega^{l(\alpha-k)}\sum_{\boldsymbol{x} = 0}^{2^d-1}\omega^{kw_{\boldsymbol{x}}}\ket{ \boldsymbol{x} }.
\end{align}

\noindent Writing $w_{\boldsymbol{x}}=dq_{\boldsymbol{x}}+r_{\boldsymbol{x}}$ for some integers $q_{\boldsymbol{x}}$ and $0\leq r_{\boldsymbol{x}} \leq d-1$ the state can be written as,
\begin{align}
    \sum_{\boldsymbol{x} = 0}^{2^d-1}\sum_{k,l=0}^{d-1}\omega^{l\alpha}\omega^{k(r_{\boldsymbol{x}}-l)}\ket{ \boldsymbol{x} }= \sum_{\boldsymbol{x} = 0}^{2^d-1}\sum_{l=0}^{d-1}\omega^{l\alpha}\delta_{r_{\boldsymbol{x}},l}\ket{ \boldsymbol{x} }=\sum_{\boldsymbol{x} = 0}^{2^d-1}\omega^{r_{\boldsymbol{x}}\alpha}\ket{ \boldsymbol{x} }.
\end{align}
We can show that this is exactly the state for the weighted hypergraph $\Delta_1 G$. We consider the following
\begin{align}
    Z_1Z_2...\,Z_{n}\ket{+}^{\otimes n}=\sum_{x_1,x_2,...,x_n=0}^{d-1}\bigotimes_{j=1}^n\omega^{x_j}\ket{x_j}=\sum_{\boldsymbol{x} = 0}^{2^d-1}\omega^{w_{\boldsymbol{x}}}\ket{ \boldsymbol{x} }=\sum_{\boldsymbol{x} = 0}^{2^d-1}\omega^{r_{\boldsymbol{x}}}\ket{ \boldsymbol{x} }.
\end{align}
Therefore, as $ Z_1Z_2...\,Z_{n}$ is diagonal we get, 
\begin{align}
    (Z_1Z_2...\,Z_{n})^{\alpha}=\sum_{\boldsymbol{x} = 0}^{2^d-1}\omega^{\alpha r_{\boldsymbol{x}}}\ket{ \boldsymbol{x} }.
\end{align}
And hence the weighted hypergraph obtained from applying $X^{-\alpha}$ on the central qudit of a \emph{star-shaped graph} and then tracing out the ancilla qudit, is equivalent to $\ket{n\text{-GHZ}_\alpha^d}$.\\
\end{proof}
\par

\noindent The stabilizers are given similarly to the qubit case,
\begin{align}
    K_{G^\prime}^{(1)}&=X\otimes Z\otimes Z ... \otimes Z\nonumber\\
    K_{G^\prime}^{(2)}&=X^{-\alpha} Z X^{\alpha}\otimes P^\dagger(\alpha\pi)XP(\alpha\pi)\otimes I ... \otimes I\nonumber\\
    K_{G^\prime}^{(3)}&=X^{-\alpha} Z X^{\alpha}\otimes I\otimes P^\dagger(\alpha\pi)XP(\alpha\pi) ... \otimes I\\
    ...&\nonumber\\
    K_{G^\prime}^{(n+1)}&=X^{-\alpha} Z X^{\alpha}\otimes I\otimes I ... \otimes P^\dagger(\alpha\pi)XP(\alpha\pi).\nonumber
\end{align}

The stabilizer formalism for non-symmetric GHZ states in qudit systems is highly analogous to that in qubit systems. The key difference between the qudit and qubit cases is the sign of the parameter $\alpha$ in the power of $X$. This distinction arises due to the Hermitian conjugate in the commutation relation between the generalized $X$ operator and the $CZ_{\textbf{e}}$.

\subsection{Qudit weighted hypergraph algorithm}

Upon the application of $X^\alpha$ on a qudit hypergraph, the resultant weighted hypergraph requires general $CP$ edges to be added. To find the $CP$ operations we do the following in the specific case of the star-shape graph as the starting graph. The weighted edges added as shown in the previous section
are given by
\begin{align}
    C_{\Delta_iG}^\alpha=(Z_1Z_3...Z_n)^\alpha=Z_1^\alpha Z_3^\alpha...Z_n^\alpha...\text{(higher order terms)}.
\end{align}
As an example for why the higher order terms are needed, consider the following example for qutrits. 
    \begin{align}
        Z_{1}=Z_{2}=\begin{pmatrix}
            1&&\\
            &\omega&\\
            &&\omega^2
        \end{pmatrix}
    \end{align}
    Then,
    \begin{align}
       Z_1Z_2&=\setlength{\arraycolsep}{3pt}\begin{pmatrix}
           1\text{ }&&&&&&&&\\
           &\omega&&&&&&&\\
           &&\omega^2&&&&&&\\
           &&&1\text{ }&&&&&\\
           &&&&\omega&&&&\\
           &&&&&\omega^2&&&\\
           &&&&&&1\text{ }&&\\
           &&&&&&&\omega&\\
           &&&&&&&&\omega^2
       \end{pmatrix}\setlength{\arraycolsep}{3pt}\begin{pmatrix}
           1\text{ }&&&&&&&&\\
           &1\text{ }&&&&&&&\\
           &&1\text{ }&&&&&&\\
           &&&\omega&&&&&\\
           &&&&\omega&&&&\\
           &&&&&\omega&&&\\
           &&&&&&\omega^2&&\\
           &&&&&&&\omega^2&\\
           &&&&&&&&\omega^2
       \end{pmatrix}=
       \setlength{\arraycolsep}{3pt}\begin{pmatrix}
           1\text{ }&&&&&&&&\\
           &\omega&&&&&&&\\
           &&\omega^2&&&&&&\\
           &&&\omega&&&&&\\
           &&&&\omega^2&&&&\\
           &&&&&1\text{ }&&&\\
           &&&&&&\omega^2&&\\
           &&&&&&&1\text{ }&\\
           &&&&&&&&\omega
       \end{pmatrix}
    \end{align}
    where we have used that $\omega^3=1$. Therefore,
    \begin{align}
        (Z_1Z_2)^\alpha=\setlength{\arraycolsep}{1pt}\begin{pmatrix}
           1\text{ }&&&&&&&&\\
           &\omega^\alpha&&&&&&&\\
           &&\omega^{2\alpha}&&&&&&\\
           &&&\omega^\alpha&&&&&\\
           &&&&\omega^{2\alpha}&&&&\\
           &&&&&1\text{ }&&&\\
           &&&&&&\omega^{2\alpha}&&\\
           &&&&&&&1\text{ }&\\
           &&&&&&&&\omega^\alpha
       \end{pmatrix}
     \end{align}
     however,
     \begin{align}
       Z_1^\alpha Z_2^\alpha&=\setlength{\arraycolsep}{1pt}\begin{pmatrix}
           1\text{ }&&&&&&&&\\
           &\omega^\alpha&&&&&&&\\
           &&\omega^{2\alpha}&&&&&&\\
           &&&1\text{ }&&&&&\\
           &&&&\omega^\alpha&&&&\\
           &&&&&\omega^{2\alpha}&&&\\
           &&&&&&1\text{ }&&\\
           &&&&&&&\omega^\alpha&\\
           &&&&&&&&\omega^{2\alpha}
       \end{pmatrix}\setlength{\arraycolsep}{1pt}\begin{pmatrix}
           1\text{ }&&&&&&&&\\
           &1\text{ }&&&&&&&\\
           &&1\text{ }&&&&&&\\
           &&&\omega^\alpha&&&&&\\
           &&&&\omega^\alpha&&&&\\
           &&&&&\omega^\alpha&&&\\
           &&&&&&\omega^{2\alpha}&&\\
           &&&&&&&\omega^{2\alpha}&\\
           &&&&&&&&\omega^{2\alpha}
       \end{pmatrix}=
       \setlength{\arraycolsep}{1pt}
        \begin{pmatrix}
           1\text{ }&&&&&&&&\\
           &\omega^\alpha&&&&&&&\\
           &&\omega^{2\alpha}&&&&&&\\
           &&&\omega^\alpha&&&&&\\
           &&&&\omega^{2\alpha}&&&&\\
           &&&&&\omega^{3\alpha}&&&\\
           &&&&&&\omega^{2\alpha}&&\\
           &&&&&&&\omega^{3\alpha}&\\
           &&&&&&&&\omega^{4\alpha}
       \end{pmatrix}
     \end{align}
    Hence we require the higher order correcting terms.
The higher order terms need to be calculated using an algorithm, which we provide here. We would like to be able to split the higher order correcting terms based on the number of qudits that they act on. Hence, we do the following,
\begin{enumerate}
    \item Consider a $d^2\times d^2$ identity matrix $I_{ij}$ acting on two particles $i$ and $j$. For the state $\ket{x_ix_j}$, calculate the number $k$ of $\omega^d$s in the phase of the state upon the application of $Z_1^\alpha Z_2^\alpha...Z_n^\alpha$. Add a phase $\omega^{-kd\alpha}$ to the $\ket{x_ix_j}\bra{x_ix_j}$ element of $I_{ij}$.
    \item Repeat for all possible values of $x_i$ and $x_j$ such that $x_i,x_j\neq 0$. The obtained matrices $I_{ij}$ are the correction that needs to be applied between all possible two qudit combinations $i$ and $j$.
    \item Consider a $d^3\times d^3$ identity matrix $I_{ijl}$ acting on three particles $i$, $j$, $l$. For the state $\ket{x_ix_jx_l}$,
    calculate the number $k$ of $\omega^d$s in the phase of the state upon the application of $Z_1^\alpha Z_2^\alpha...Z_n^\alpha$. Add a phase $\omega^{-kd\alpha}$ to the $\ket{x_ix_jx_l}\bra{x_ix_jx_l}$ element of $I_{ijl}$.
    \item For the state $\ket{x_ix_jx_l}$, consider the phase $\omega^p$ upon the application of $I_{ij}I_{il}I_{jl}$, i.e. all possible combinations of 2 from $i,j,l$. Add a phase $\omega^{-p}$ to the to the $\ket{x_ix_jx_l}\bra{x_ix_jx_l}$ element of $I_{ijl}$.
    \item Repeat for all possible values of $x_i$, $x_j$ and $x_l$ such that $x_i,x_j,x_l\neq 0$. The obtained matrices $I_{ijl}$ are the correction that needs to be applied between all possible three qudit combinations $i$ $j$ and $l$.
    \item Repeat the above recursively until the number of qudits being applied on is $n$.

\end{enumerate}
The resultant matrices $I_{i_1i_2...i_r}$ for $2\leq r\leq n$ are the higher order correcting terms which give rise to the weighted hypergraph upon the application of $X^\alpha$ on the qudit star-shape graph.

\subsection{Proof of Proposition 4}
\begin{proposition}
        Given the general non-symmetric qudit GHZ,
    \begin{align}
    \ket{n\text{-GHZ}_{\vec{a}}}=\displaystyle\sum_{j=0}^{d-1}a_j\ket{jj\dots j},
    \end{align}
    there exists a LU transformation that takes $\ket{n\text{-GHZ}_{\vec{a}}}$ to a star-shaped controlled-unitary-$U$ graph, for
    \begin{align}
    U=HA^\dagger Z A H^\dagger,
\end{align}
where the operator $A$ is defined as,
\begin{align}
    A\ket{0}=\displaystyle\sum_{j=0}^{d-1}a_j\ket{j}.
\end{align}
\end{proposition}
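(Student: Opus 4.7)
The plan is to exhibit the LU transformation explicitly and verify it by direct computation. Fix the convention that the star-shaped $CU$ graph state has central qudit $1$ connected to outer qudits $2,\dots,n$, each initialised in $\ket{+}=H\ket{0}$, with each edge carrying a $CU$ gate controlled on the \emph{outer} qudit and acting on the center as target, i.e.\ $CU_{k,1}\ket{x_k}\ket{y_1}=\ket{x_k}\,U^{x_k}\ket{y_1}$.

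Expanding the outer qudits in the computational basis, the controlled powers combine on the center into a single factor:
\begin{equation}
\ket{G}\propto\sum_{x_2,\dots,x_n}U^{x_2+\cdots+x_n}H\ket{0}_1\otimes\ket{x_2\cdots x_n}.
\end{equation}
The first algebraic ingredient is the identity $U^s=HA^\dagger Z^sAH^\dagger$ for every $s$, which follows by induction from the definition $U=HA^\dagger ZAH^\dagger$ by collapsing the adjacent $H^\dagger H$ and $AA^\dagger$ factors in $U\cdot U$. In particular $U^sH\ket{0}=HA^\dagger Z^sA\ket{0}$, so each term of the sum reduces to $HA^\dagger Z^{\sum_k x_k}A\ket{0}_1\otimes\ket{x_2\cdots x_n}$.

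The second ingredient is the choice of local unitaries. I apply $AH^\dagger$ on the center and $H^\dagger$ on each outer qudit. On the center, $AH^\dagger\cdot HA^\dagger Z^{\sum x_k}A\ket{0}=Z^{\sum x_k}A\ket{0}=\sum_l a_l\,\omega^{l\sum x_k}\ket{l}$. Interchanging the order of summation factorises the outer qudits, and each factor is recognised as $H\ket{l}$ via the Fourier identity $\sum_{x_k}\omega^{lx_k}\ket{x_k}\propto H\ket{l}$; the $H^\dagger$ on the outer qudits then sends $H\ket{l}\mapsto\ket{l}$. The result is $\sum_l a_l\ket{ll\cdots l}=\ket{n\text{-GHZ}_{\vec a}}$, and rearranging gives $\ket{G}=(HA^\dagger)_1\otimes H^{\otimes(n-1)}\ket{n\text{-GHZ}_{\vec a}}$, which is the LU transformation sought.

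The main obstacle I expect is fixing the correct orientation of the $CU$ gates on the star edges. With the opposite convention, where the center acts as control, the state becomes $\frac{1}{\sqrt d}\sum_j\ket{j}_1\otimes(U^j\ket{+})^{\otimes(n-1)}$; one can check that the Schmidt coefficients across the center-vs-outer bipartition are eigenvalues of a power of the Gram matrix of $\{U^j\ket{+}\}$ and do not match the Schmidt coefficients $|a_l|$ of $\ket{n\text{-GHZ}_{\vec a}}$ for $n\ge 3$ and generic $\vec a$, ruling out any LU transformation in that direction. A minor point is that $A$ is only specified through $A\ket{0}=\sum_l a_l\ket{l}$, but the derivation uses $A$ only through $A\ket{0}$ and its adjoint, so any unitary completion works.
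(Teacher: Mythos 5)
Your proof is correct, and it runs the paper's argument in the opposite direction. The paper starts from the preparation circuit $\ket{n\text{-GHZ}_{\vec a}}=\prod_{k=2}^n CX_{1k}A_1\ket{0\cdots 0}$, shows that $H_1A_1^\dagger\prod_{l=2}^n CZ_{1l}^\dagger H_l$ disentangles it to $\ket{+}^{\otimes n}$, and then inserts $A_1H_1^\dagger H_1A_1^\dagger=\mathbb{I}$ so as to conjugate each $CZ_{l1}^\dagger$ into $CU_{l1}^\dagger$ with $U=HA^\dagger ZAH^\dagger$, reading off the LU equivalence from the resulting operator identity. You instead expand the star-shaped $CU$ graph state directly in the computational basis of the outer qudits, collapse the accumulated controlled powers on the center via the telescoping identity $U^s=HA^\dagger Z^sAH^\dagger$ (the same conjugation fact the paper uses at the single-gate level), and Fourier-resum $\sum_x\omega^{lx}\ket{x}\propto H\ket{l}$ to land on $\sum_l a_l\ket{l\cdots l}$ after the local rotations $(AH^\dagger)_1\otimes(H^\dagger)^{\otimes(n-1)}$. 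The two computations share their essential ingredients, but your direction (graph state $\to$ GHZ) avoids the circuit-insertion trick and makes fully explicit the orientation of the gates — center as target, outer qudits as controls — a point the paper only fixes in the caption of its figure, and your Schmidt-coefficient argument that the opposite orientation cannot work for $n\ge 3$ and generic $\vec a$ is a worthwhile observation the paper does not make; conversely, the paper's direction (GHZ $\to\ket{+}^{\otimes n}$) exhibits the disentangling circuit itself, which is what ties this proposition back to the qubit construction (e.g.\ $A=XX^\alpha$ recovering the weighted hypergraph of Section III). Your closing point about $A$ is also right: the argument uses $A$ only through $A\ket{0}$ and its adjoint, so any unitary completion works (normalization of $\vec a$ being implicit throughout the paper), and unwinding the paper's operator identity gives precisely your final relation $\ket{G}=(HA^\dagger)_1\otimes H^{\otimes(n-1)}\ket{n\text{-GHZ}_{\vec a}}$.
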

\begin{proof}
   First let us consider the construction of $\ket{n\text{-GHZ}_{\vec{a}}}$ using the $A$ operator and controlled-$X$ ($CX$) operations $n-1$ times,
    \begin{align}
    \ket{n\text{-GHZ}_{\vec{a}}}=\displaystyle\sum_{j=0}^{d-1}a_j\ket{jj\dots j}=\displaystyle\prod_{k=2}^n CX_{1k}A_1\ket{00\dots 0}.
    \end{align}
    Now let us apply Hadamards on all the qudits except the first one.
\begin{align}
    \displaystyle\prod_{l=2}^n H_l \ket{n\text{-GHZ}_{ns}}=\sum_{j=0}^{d-1}a_j\ket{j}\sum_{k=0}^{d^{n-1}-1}\omega^{j(k_1+\dots+k_{n-1})}\ket{k} =\sum_{j=0}^{d-1}a_j\ket{j}\sum_{k=0}^{d^{n-1}-1}\omega^{jw_k}\ket{k}.
\end{align}
Next we apply $CZ^\dagger$ operators $n-1$ times with the control on the first qudit and the targets as the rest.
\begin{align}
    \displaystyle \prod_{l=2}^{n}CZ_{1l}^\dagger \sum_{j=0}^{d-1}a_j\ket{j}\sum_{k=0}^{d^{n-1}-1}\omega^{jw_k}\ket{k}=\sum_{j=0}^{d-1}a_j\ket{j}\sum_{k=0}^{d^{n-1}-1}\ket{k}.
\end{align}
Lastly we apply $HA^\dagger$ to the first qudit,
\begin{align}
    H_1A_1^\dagger \sum_{j=0}^{d-1}a_j\ket{j}\sum_{k=0}^{d^{n-1}-1}\ket{k}=\sum_{j=0}^{d-1}\ket{j}\sum_{k=0}^{d^{n-1}-1}\ket{k}=\sum_{k^\prime=0}^{d^{n}-1}\ket{k^\prime}=\ket{+}^{\otimes n},
\end{align}
where $k^\prime$ is the decimal representation of $jk_1k_2...\,k_{n-1}$.
Hence, we have found that the operation,
\begin{align}
    H_1A_1^\dagger\displaystyle\prod_{l=2}^n CZ_{1l}^\dagger H_l.
\end{align}
Takes $\ket{n\text{-GHZ}_{\vec{a}}}$ to $\ket{+}^{\otimes n}$. We can rewrite this operation as,
\begin{align}
    H_1A_1^\dagger\displaystyle\prod_{l=2}^n CZ_{l1}^\dagger H_l A_1 H_1^\dagger H_1A_1^\dagger=\prod_{l=2}^n CU_{l1}^\dagger H_l H_1 A_1^\dagger,
\end{align}
where
\begin{align}
    U=HA^\dagger Z A H^\dagger.
\end{align}
Therefore, a $CU$ \emph{star-shaped graph state} is LU equivalent to $\ket{n\text{-GHZ}_{\vec{a}}}$.
\end{proof}

\end{document}